\def\ps@headings{%
\def\@oddhead{\mbox{}\scriptsize\rightmark \hfil \thepage}%
\def\@evenhead{\scriptsize\thepage \hfil \leftmark\mbox{}}%
\def\@oddfoot{}%
\def\@evenfoot{}}
      \tikzstyle{vecArrow} = [thick, decoration={markings,mark=at position
     \tikzstyle{innerWhite} = [semithick, white,line width=1.4pt, shorten >= 4.5pt]
\tikzset{font=\small}
\tikzset{
  shift left/.style ={commutative diagrams/shift left={#1}},
  shift right/.style={commutative diagrams/shift right={#1}}
}
\newtheorem{definition}{Definition}
\newtheorem{theorem}{Theorem}
\newtheorem{myAttack}{Attack}
\newtheorem{proposition}{Proposition}
\newtcolorbox{myframe}[2][]{%
  enhanced,colback=white,colframe=black,coltitle=black,
  sharp corners,boxrule=0.6pt,
  fonttitle=\itshape,
  attach boxed title to top left={yshift=-0.3\baselineskip-0.4pt,xshift=2mm},
  boxed title style={tile,size=minimal,left=0.5mm,right=0.5mm,
    colback=white,before upper=\strut},
  title=#2,#1
}
\def\BibTeX{{\rm B\kern-.05em{\sc i\kern-.025em b}\kern-.08em
    T\kern-.1667em\lower.7ex\hbox{E}\kern-.125emX}}
      \tikzstyle{vecArrow} = [thick, decoration={markings,mark=at position
     \tikzstyle{innerWhite} = [semithick, white,line width=1.4pt, shorten >= 4.5pt]
\tikzset{font=\small}
\tikzset{
  shift left/.style ={commutative diagrams/shift left={#1}},
  shift right/.style={commutative diagrams/shift right={#1}}
}
\pgfplotsset{
compat=1.7,
every axis/.append style={
line width=1.25pt,
tick style={line width=1.25pt, color=black, line cap=round}
}
}
\definecolor{Gray}{gray}{0.9}
\definecolor{airforceblue}{rgb}{0.36, 0.54, 0.66}
\definecolor{aliceblue}{rgb}{0.94, 0.97, 1.0}
\definecolor{alizarin}{rgb}{0.82, 0.1, 0.26}
\definecolor{amber}{rgb}{1.0, 0.75, 0.0}
\definecolor{amber(sae/ece)}{rgb}{1.0, 0.49, 0.0}
\definecolor{applegreen}{rgb}{0.55, 0.71, 0.0}
\definecolor{babyblue}{rgb}{0.54, 0.81, 0.94}
\definecolor{ballblue}{rgb}{0.13, 0.67, 0.8}
\definecolor{beaublue}{rgb}{0.74, 0.83, 0.9}
\definecolor{bronze}{rgb}{0.8, 0.5, 0.2}
\definecolor{battleshipgrey}{rgb}{0.52, 0.52, 0.51}
\definecolor{bole}{rgb}{0.47, 0.27, 0.23}
\definecolor{bulgarianrose}{rgb}{0.28, 0.02, 0.03}
\definecolor{carolinablue}{rgb}{0.6, 0.73, 0.89}
\definecolor{ceil}{rgb}{0.57, 0.63, 0.81}
\definecolor{cerulean}{rgb}{0.0, 0.48, 0.65}
\definecolor{charcoal}{rgb}{0.21, 0.27, 0.31}
\definecolor{columbiablue}{rgb}{0.61, 0.87, 1.0}
\definecolor{coolblack}{rgb}{0.0, 0.18, 0.39}
\definecolor{darkcandyapplered}{rgb}{0.64, 0.0, 0.0}
\definecolor{darkbrown}{rgb}{0.4, 0.26, 0.13}
\definecolor{darkgray}{rgb}{0.66, 0.66, 0.66}
\definecolor{darkjunglegreen}{rgb}{0.1, 0.14, 0.13}
\definecolor{darktaupe}{rgb}{0.28, 0.24, 0.2}
\definecolor{frenchblue}{rgb}{0.0, 0.45, 0.73}
\definecolor{almond}{rgb}{0.94, 0.87, 0.8}
\definecolor{beige}{rgb}{0.96, 0.96, 0.86}
\definecolor{bisque}{rgb}{1.0, 0.89, 0.77}
\definecolor{black}{rgb}{0.0, 0.0, 0.0}
\definecolor{fluorescentorange}{rgb}{1.0, 0.75, 0.0}
\definecolor{ghostwhite}{rgb}{0.97, 0.97, 1.0}
\definecolor{antiquewhite}{rgb}{0.98, 0.92, 0.84}
\definecolor{davy\'sgrey}{rgb}{0.33, 0.33, 0.33}
\def\BibTeX{{\rm B\kern-.05em{\sc i\kern-.025em b}\kern-.08em
    T\kern-.1667em\lower.7ex\hbox{E}\kern-.125emX}}
\newcommand{\LotSFull}{Private Searchable Functional Encryption}
\newcommand{\LotS}{PSFE}
\begin{document}

\title{Heal the Privacy: Functional Encryption and Privacy-Preserving Analytics\\
}


\author{
    \IEEEauthorblockN{Alexandros Bakas, Antonis Michalas}
    \IEEEauthorblockA{Tampere University
    \\\{alexandros.bakas, antonios.michalas\}@tuni.fi}
}
%

\maketitle

\IEEEpubidadjcol

\begin{abstract}
Secure cloud storage is an issue of paramount importance that both businesses and end-users should take into consideration before moving their data to, potentially, untrusted clouds. Migrating data to the cloud raises multiple privacy issues, as they are completely controlled by a cloud provider. Hence, an untrusted cloud provider can potentially breach users' privacy and gain access to sensitive information. The problem becomes even more pronounced when the cloud provider is required to store a statistical database and periodically publish analytics. In this work, we first present a detailed example showing that the use of cryptography is not enough to ensure the privacy of individuals. Then, we design a hybrid protocol based on Functional Encryption and Differential Privacy that allows the computations of statistics in a privacy-preserving way.
\end{abstract}


\begin{IEEEkeywords}
Cloud Security, Differential Privacy, Functional Encryption
\end{IEEEkeywords}

\section{Introduction}
\label{sec:intro}

Statistics, and data analytics in general, are very important tools for a variety of predictions. From  real-time traffic analysis to disease outbreaks discovery, statistics allow societies to predict critical situations and prepare accordingly. However, along with the growth of cloud computing, such prediction services are  moving to the cloud, where untrusted third parties may host and control statistical databases. Naturally, this raises several security concerns as the privacy of individuals can often be breached. These concerns become even greater when the analytics in question refer to extra sensitive data, such as medical records. A first response to these problems was presented in~\cite{dwork2006calibrating} with the formalization of differential privacy.

Differential Privacy allows sharing information about a dataset while withholding information about the individuals. In a differential private scheme, a curator (data owner) generates a dataset and, upon request of an analysts, releases statistics. To ensure the individuals' privacy the statistics are filtered through a privacy mechanism and finally the analyst receives a noisy result. However, the results must be published in a way that will allow any analyst to deduce accurate enough results without breaching the privacy of any given individual. Although the problem of privatizing datasets has been thoroughly studied by both researchers and big industrial players like Apple and Google, the problem of further securing datasets with encryption has not drawn much attention so far. In this work, we aim to combine differential privacy with the promising concept of Functional Encryption (FE) in an attempt to design a protocol for privacy-preserving release of statistics.

FE is an emerging cryptographic technique which allows computations over encrypted data. More precisely, FE schemes provide key generation algorithms that output decryption keys with remarkable capabilities. In contrast to traditional cryptography, each functional decryption key $\mathsf{sk}_f$ is associated with a function $f$. Decrypting a ciphertext $\mathsf{Enc}(x)$ using $\mathsf{sk}_f$, yields $f(x)$ and thus keeps the $x$ private. More recent works~\cite{goldwasser2014multi} generalized the concept of FE by presenting Multi-Input Functional Encryption (MIFE). In a MIFE scheme, given encryptions $\mathsf{Enc}(x_1), \dots, \mathsf{Enc}(x_n)$, a user can use $\mathsf{sk}_f$ to recover $f(x_1, \dots, x_n)$. In our work, we combine MIFE with differential privacy to design a scheme that allows the periodical release of statistics in a privacy-preserving way.

\bigskip

\paragraph*{\textbf{Contribution}} To the best of our knowledge this is amongst the first works that combine differential privacy with cryptography to ensure the security of datasets, and the first one that does so using FE. More specifically:

\begin{enumerate}[label=\bfseries C\arabic*.]
	\item By combining FE with differential privacy, we propose a hybrid protocol as solution to the problem of designing encrypted private databases. Our work draws inspiration from both the fields of FE and Symmetric Searchable Encryption~\cite{10.1007/978-3-030-81242-3_5,9219739}
	\item We provide a detailed security analysis of our protocol by demonstrating that it remains secure in the presence of a malicious adversary. Furthermore, we formally prove that our protocol satisfies the notion of differential privacy.
	\item Our solution is considered as efficient since it relies only on symmetric cryptographic primitives.
\end{enumerate}


\bigskip

\paragraph*{\textbf{Organization}} The rest of the paper is organised as follows: In \autoref{sec:motivation}, we present a concrete example that proves that cryptography is not enough to secure statistical databases. In \autoref{sec:related}, we discuss important published works in the fields of functional encryption and differential privacy. \autoref{sec:background} contains all the necessary notations, cryptographic primitives and security notions used throughout the paper, and is followed by \autoref{sec:architecture}, where we present the detail of our system model. Section~\ref{sec:formal} demonstrates the core contribution of the work as we present a scheme for publishing statistics in a privacy-preserving way. The security of our construction is proved in \autoref{sec:security} and finally, \autoref{sec:conclusions} concludes the paper.

\section{Motivation and Application Domain}
\label{sec:motivation}

The ultimate goal of this work is to enable authorized users (analysts) to perform statistical analyses over medical datasets in a privacy-preserving way. In order to make this possible, we needed to ensure that our construction would be resistant against both internal (e.g.\ malicious servers) as well as external (e.g.\ malicious analysts) attacks.  

For our solution, we used structured datasets composed of three different kinds of variables: \textit{categorical,} \textit{ordinal} and \textit{numerical}:


\begin{enumerate}[\bfseries a.] 
	\item \textbf{Categorical variables} do not have a natural ordering. For example, the medical diagnosis of a patient is a categorical variable.
	\item \textbf{Ordinal variables} are categorical variables for which possible values can be ordered. For example, the condition of a patient, for which we can arbitrarily assume to be \texttt{mild} $<$ \texttt{severe} $<$ \texttt{critical} is considered an ordinal variable.
	\item \textbf{Numerical variables} are expressed using numbers e.g.\ the age of a patient or systolic and diastolic blood pressures.
\end{enumerate}

\medskip

To make things clearer, let us consider a scenario in which four patients sought medical care. Following the examination, the hospital stores their medical records to a structured dataset. As a next step, the hospital (who in this case acts as the curator of the dataset) masks all the ordinal and categorical variables in the dataset using a cryptographic hash function, and encrypts the numerical variables with a MIFE scheme. Without loss of generality, we can assume that our dataset looks like the one in~\autoref{tab:dataset}, where each $H(\cdot)$ denotes the hash of a variable and each $c_x$ denotes the ciphertext corresponding to a plaintext $x$.  Finally, the dataset is outsourced to a cloud service provider (CSP), where it will be stored. 

 \begin{table}[t]
 \centering
 \scalebox{0.8}{
 \begin{tabular}{ |p{1.5cm}||p{2.0cm}||p{1.7cm}||p{1.0cm}||p{1.0cm}||p{1.0cm}|}
 \hline
 \rowcolor{davy\'sgrey}
 \multicolumn{6}{|c|}{\color{white}{\textbf{Dataset}}} \\
 \hline
 \rowcolor{Gray}
 \textbf{Patients}     & \textbf{Diagnosis} & \textbf{Condition}  & \textbf{Age} & \textbf{sbp} & \textbf{dbp} \\
 \hline
 $H(\mathrm{Dennis})$ &   $H(\mathrm{covid}19)$ &  $H(\mathrm{mild})$   & $c_{27}$ &  $c_{110}$ & $c_{75}$  \\ 
 \hline 
 $H(\mathrm{Shawn})$ &   $H(\mathrm{flu})$ &   $H(\mathrm{severe})$ &  $c_{58}$ &    $c_{123}$ &   $c_{60}$  \\
 \hline
 $H(\mathrm{Dirk})$ &   $H(\mathrm{flu})$ &   $H(\mathrm{mild})$ &   $c_{41}$ &   $c_{120}$ &   $c_{80}$ \\
 \hline
 $H(\mathrm{Scottie})$ &   $H(\mathrm{pneumonia})$ &   $H(\mathrm{critical})$ &   $c_{65}$ &   $c_{149}$ &   $c_{58}$ \\
 \hline
\end{tabular}}
\bigskip
\caption{Structured dataset with four cases and five variables. sbp: systolic blood pressure, dbp: diastolic blood pressure.}
\label{tab:dataset}
\end{table}

For our work, we want to enable analysts to query the CSP in a privacy-preserving manner with queries in the form of ``\textit{What is the average age of all patients that have been diagnosed with covid19?}" or \textit{``What is the blood pressure of the patients whose condition is severe?"}. In other words, we want to be able to compute a function on the values of the numerical variables that correspond to a specific categorical or ordinal variable. 

It should be noted that although cryptography ensures the data confidentiality, it does not ensure the individuals' privacy. For example, if an analyst were to initially requests the average age for the first three cases in the dataset and subsequently request the average age of all patients, it would become obvious how Scottie's age influences the average and hence, its value could be deduced. To protect the individuals' privacy, we rely on the notion of differential privacy. By embedding well-calibrated error in the decryption algorithm, we ensure that the analyst has access to accurate enough results in order to perform any kind of analytics, without breaching the induvidual's privacy.

\section{Related Work}
\label{sec:related}

\paragraph*{\textbf{Functional Encryption}} While numerous studies with general definitions and generic constructions of FE have been proposed~\cite{boneh2012functional,10.1145/3133956.3134106,cryptoeprint:2021:1692,10.1145/3508398.3511514,10.1007/978-3-030-70852-8_7,waters2015punctured} there is a clear lack of works proposing FE schemes supporting specific functions. To the best of our knowledge, currently the number of supported functionalities is limited to inner products~\cite{10.1007/978-3-319-96884-1_20,abdalla2017multi,abdalla2015simple}, quadratic polynomials~\cite{sans2018reading} and the $\ell_1$ norm of a vector~\cite{9342992}. In this work, we use the symmetric construction for the $\ell_1$ norm presented in~\cite{9342992} to design a functionally encrypted private scheme. 

\paragraph*{\textbf{Differential Privacy}} Differential privacy is a notion first formalized in~\cite{dwork2006calibrating}, where authors focused on ensuring the individuals' privacy. More precisely, they proved that by adding well-calibrated noise to the data, the presence or absence of an individual's information is \textit{irrelevant} to the output of a database query. Since then, differential privacy has drawn the attention of both researchers~\cite{andres2013geo,barak2007privacy,blum2013learning,kasiviswanathan2011can}
and key industry players such as Google~\cite{fanti2016building} and Uber~\cite{johnson2018towards}. Nonetheless, to the best of our knowledge the only work that combines differential privacy with cryptography is the one presented in~\cite{agarwal2019encrypted}, where authors designed a scheme for private histogram queries. However, the solution presented in~\cite{agarwal2019encrypted} relied on homomorphic encryption and hence, queries were restricted to only asking for the value of a counter. In our work, by using FE we allow users to perform any kind of query that is supported by the functionality of the FE scheme.

\section{Background}
\label{sec:background}

\paragraph*{\textbf{Notation}} If $\mathcal{Y}$ is a set, we use $y \xleftarrow{\$} \mathcal{Y}$ if $y$ is chosen uniformly at random from $\mathcal{Y}$. The cardinality of a set $\mathcal{Y}$ is denoted by $|\mathcal{Y}|$. Vectors are denoted in bold as  $\mathbf{x} = [x_1, \ldots ,x_n]$.   A probabilistic polynomial time (PPT) adversary $\mathcal{ADV}$ is a randomized algorithm for which there exists a polynomial $p(z)$ such that for all input $z$, the running time of $\mathcal{ADV}(z)$ is bounded by $p(|z|)$.

\subsection{Functional Encryption}
\label{subsec:FE}

\begin{definition}[Multi-Input Functional Encryption in the Symmetric Key Setting]
Let $\mathcal{F} = \{f_1, \ldots, f_n\}$ be a family of n-ary functions where each $f_i$ is defined as follows: ${f_i : \mathbb{Z}^n \rightarrow \mathbb{Z}}$. A multi-input functional encryption scheme for $\mathcal{F}$ consists of the following algorithms:

\begin{itemize}
	\item $\mathsf{Setup}(1^\lambda):$ Takes as input a security parameter $\lambda$  and outputs a secret key $\mathbf{K} = [\mathsf{k}_1, \ldots ,\mathsf{k}_n] \in \mathbb{Z}^n$.
	\item $\mathsf{Enc(\mathsf{K}}, i, x_i):$ Takes as input $\mathbf{K}$, an index $i \in [n]$ and a message $x_i \in \mathbb{Z}$ and outputs a ciphertext $ct_i$.
	\item $\mathsf{KeyGen(\mathbf{K}}, f):$ Takes as input $\mathbf{K}$  and a description of a function $f_i$ and outputs a functional decryption key $\mathsf{sk}_{f_i}$.
	\item $\mathsf{Dec}(\mathsf{sk}_{f_i}, ct_1, \dots, ct_n):$ Takes as input a decryption key $\mathsf{sk}_{f_i}$ for a function $f_i$ and $n$ ciphertexts and outputs a value $y \in \mathbb{Z}$.
\end{itemize}
\end{definition}

For the needs of our work, we rely on the one-AD-IND-secure symmetric MIFE scheme for the $\ell_1$ norm, presented in~\cite{9342992}. Informally, one-AD-IND security ensures that given the encryption of two messages $x_1$ and $x_2$, and a functional key $\mathsf{sk}_f$ for a function $f$ such that $f(x_1) = f(x_2)$, no PPT adversary should be able to distinguish between them. With the aim of completeness and improved readability, the MIFE scheme for the $\ell_1$ norm is illustrated in~\autoref{fig:MIFEl}.

\begin{figure}[H]
\begingroup
\fontsize{9.2pt}{4pt}\selectfont
\begin{myframe}{}
\begin{multicols}{2}
\setlength\columnsep{100pt}
	 \underline{$\mathsf{Setup}(1^\lambda):$}\\
			$\forall\: i\in[n], \mathsf{k}_i \xleftarrow{\$} \mathbb{Z}$\\ 
			 Return $\mathbf{K} = [\mathsf{k_1, \dots, k_n}] \in \mathbb{Z}^n$\\
		
	 \underline{$\mathsf{Enc(\mathbf{\mathbf{K}}, i , x_i)}:$}\\
				 Return  $ct_i =  x_i + \mathsf{k}_i$

\columnbreak		
		
	 \underline{$\mathsf{KeyGen}(\mathbf{K}):$}\\
			 Return $\mathsf{sk}_f = \|\mathsf{K}\|_1 = \sum_{i}^{n}\mathsf{k}_i$\\
								
	 \underline{$\mathsf{Dec}(\mathsf{sk}_f, ct_1, \dots, ct_n):$}\\
		 Return $\sum_{i=1}^{n} ct_i - \mathsf{sk}_f$
		 \end{multicols}
\end{myframe}
\endgroup
\caption{one-AD-IND-secure MIFE for the $\ell_1$ norm ($\mathsf{MIFE}_{\ell_1}$).}
\label{fig:MIFEl}
\end{figure}

\subsection{Differential Privacy}
\label{subsec:DP}

We proceed by providing the main definitions of $\epsilon$-differential privacy ($\epsilon$-DP) and the main properties of the Laplace mechanism. 


\begin{definition}
Two datasets DS and DS' are neighbouring if:
\begin{dmath}
\|DS-DS'\|_1 \leq 1
\end{dmath}
\end{definition}

\begin{definition}[$\epsilon$-DP]
\label{def:DP}
 A privacy mechanism $\mathcal{M}: \mathbb{N}^{|DS|} \rightarrow Im(\mathcal{M})$ is $\epsilon$-DP if $\forall \, \mathcal{S}\subset Im(\mathcal{M})$ and $\forall$ neighboring datasets ${DS, DS' \in \mathbb{N}^{|DS|}:}$

\begin{dmath*}
{Pr[\mathcal{M}(DS) \in \mathcal{S}] \leq e^\epsilon Pr[\mathcal{M}(DS) \in \mathcal{S}]}
\end{dmath*}

\end{definition}

\begin{definition}[Laplace distribution] The Laplace distribution centered at 0 and with scale parameter $b$ is given by:

\begin{dmath*}
Lap(z) = \frac{1}{2b}e^{-\frac{|z|}{b}}
\end{dmath*}

where the mean is 0 and the variance is $2b^2$.
 
\end{definition}

We are now ready to proceed with the definition of the Laplace Mechanism~\cite{dwork2006calibrating}.

\begin{definition}[Laplace Mechanism]
Given a query ${q: \mathbb{N}^{|\mathcal{D}|} \rightarrow \mathbb{R}}$, the Laplace Mechanism is:
\begin{dmath*}
M_L(DS, q, \epsilon) = q(DS) + Y_i,
\end{dmath*}
where $Y_i \sim Lap(b)$
\end{definition}

A proof showing that the Laplace Mechanism is  $\epsilon$-differentially private can be found in~\cite{dwork2006calibrating}.

\section{Architecture}
\label{sec:architecture}
In this section, we introduce the system model by explicitly describing the main entities participating in our protocol along with their capabilities. 

We assume the existence of the following four entities:

\begin{enumerate}[\bfseries 1.]
	\item \textbf{Curator (\textbf{C}):} \textbf{C} is responsible for generating an encrypted dataset and outsourcing to the CSP. \textbf{C} also generates a list $L_{MA}$ containing mappings between encryption keys and their unique identifiers. This list is outsourced to MA.
	\item {\textbf{Analyst (\textbf{A})}:} \textbf{A} is an analyst that can perform statistics on the data stored in the CSP. 
	\item {\textbf{Cloud Service Provider (CSP)}:} We consider a cloud computing environment based on a trusted IaaS provider 
similar to the one described in~\cite{Michalas:17:Trusted:Launch}. The CSP is responsible for storing an encrypted dataset. Apart from that, upon \textbf{A}'s request the CSP is required to perform a search operation on the encrypted dataset and further communicate with the Master Authority for the generation of secret functional keys.
	\item {\textbf{Master Authority (MA)}:} \textbf{MA} is a trusted authority that is responsible for issuing secret functional keys. To do so, MA is required to maintain a list containing mappings between encryption keys and their unique identifiers. 
\end{enumerate}

\section{Formal Construction}
\label{sec:formal}

This Section presents the core contribution of this work as we formally present \LotSFull(\LotS). We assume the existence of an IND-CCA2 secure public key cryptosystem and a EUF-CMA secure signature scheme. Finally, we also utilize a first and second preimage resistant hash function $H$. $\mathsf{\LotS}$ consists of three algorithms  $\mathsf{Gen, Setup}$ and $\mathsf{Read}$ such that:

\noindent{\textbf{\LotS.Gen:}} Each entity from the described architecture receives a public/private key pair $\mathsf{(pk,sk)}$ for an IND-CCA2 secure public cryptosystem, and publishes its public key while keeping the private key secret. Apart from that, all entities generate a signing and a verification key. Below we provide a list of all the generated keys:

\begin{itemize}
	\item $(\mathsf{pk_C, sk_C}), (\mathsf{sig_C, ver_C})$ - public/private, signing/verification and MIFE secret key for the Curator; 
	\item $(\mathsf{pk_A, sk_A}), (\mathsf{sig_A, ver_A})$ - public/private and signing/verification key pairs for the Analyst;
	\item $(\mathsf{pk_{CSP}, sk_{CSP}}), (\mathsf{sig_{CSP}, ver_{CSP}})$ - public/private and signing/verification key pairs for the cloud service provider;
	\item $(\mathsf{pk_{MA}, sk_{MA}}), (\mathsf{sig_{MA}, ver_{MA}})$ - public/private, signing/verification key pairs for the master authority.
\end{itemize} 

\noindent{\textbf{\LotS.Setup:}} Represents a three party protocol between \textbf{C}, the CSP and MA. $\mathsf{\LotS.Setup}$ is initiated by $\mathbf{C}$ who wants to outsource an encrypted dataset (EDS) to the CSP. To encrypt the dataset, \textbf{C} hashes all the categorical and the ordinal entries concatenated with a salt $s$ to prevent dictionary attacks. Apart from that, \textbf{C} also hashes the entries without the salt and stores each pair (salted and unsalted hashed entry) in a list $L_{MA_{s}}$. For the numerical ones, \textbf{C} generates a symmetric key $\mathsf{k}$ and uses it to encrypt the corresponding entry. Apart from that, for each generated $\mathsf{k}$, \textbf{C} generates a unique index. The keys, along with their indexes are stored in a list $L_{MA_{k}}$. Finally, \textbf{C} sends $m_1 = \langle t_1, \mathsf{pk}_{CSP}(\mathrm{EDS}), \sigma_C(H(t_1||EDS))\rangle$ to the CSP and $m_2 = \langle t_2, \mathsf{pk_{MA}}(L_{MA_{k}}, L_{MA_{s}}), \sigma_C{H(t_2\|L_{MA})}\rangle$ to MA. Upon receiving these messages, both the CSP and MA verify their freshness (by looking at the timestamps $t_1$ and $t_2$) and the identity of the sender (by verifying the signature). If the verifications are successful, the CSP stores EDS and MA stores both $L_{MA_{s}}$ and $L_{MA_{k}}$. In addition to that, both the CSP and MA send an acknowledgement to \textbf{C} that they have successfully stored EDS and the two lists via $m_3 = \langle t_3, \sigma_{CSP}(H(t_3\|EDS)) \rangle$ and $m_4 = \langle t_4,  \sigma_{MA}(t_4\|L_{MA_{s}}\|L_{MA_{s}}) \rangle$ respectively. The encryption of the dataset is presented in detail in algorithm~\autoref{alg:dataset encryption} and the flow of $\mathsf{PSFE.Setup}$ is illustrated in~\autoref{fig:setup}.

\begin{algorithm}[H]
\caption{Dataset Encryption}
\label{alg:dataset encryption}
\footnotesize
\begin{algorithmic}[1]
\State \textbf{Input:} A plaintext Dataset DS
\State \textbf{Output:} An encrypted Dataset EDS
\State  $\mathbf{K} = \{\}$
\State $L_{MA} = \{\}$ 
\State [r, c] = size(DS) \Comment{Number or rows and columns}
 	\For {i = 1 to r} \Comment{All the cases}
 		\For {j = 1 to c} \Comment{All the variable}
 			\If {DS(i, j) == categorical OR ordinal}
 				\State $s_{i, j} \leftarrow \mathbb{Z}$
 				\State $L_{MA_{s}} = L_{MA_{s}} \cup H(\mathrm{DS(i,j)})\|(H(\mathrm{DS}(i,j))\|s_{i,j})$
 				\State DS(i, j) = $H(\mathrm{DS}(i, j)\|s_{i,j})$
 				
 			\Else 
 				\State Generate $\mathsf{k}_{i, j} \in \mathbb{Z}$
 				\State $\mathrm{index_{k_{i, j}}}$ = $H(\mathsf{k}_{i, j})$
 				\State DS(i, j) = (DS(i, j) + $\mathsf{k}_{i, j}) || \mathrm{index_{k_{i, j}}}$
 				\State $L_{MA} = L_{MA} \cup (\mathsf{k}_{i, j}||\mathrm{index_{k_{i, j}}})$
 			\EndIf
 		\EndFor
 	\EndFor
\State Outsource $L_{MA_{s}}$ and $L_{MA_{k}}$ to MA 		
\State EDS = DS	
\end{algorithmic}
\end{algorithm}

\begin{figure}[H]
  \centering
  \scalebox{0.7}{
  \begin{sequencediagram}
    \newthread{A}{Curator}{}
    \newinst[4.7]{B}{CSP}{}
    \newinst[0.8]{C}{MA}{}
    \begin{callself}{A}{Encrypt DS using Algorithm~\ref{alg:dataset encryption}}{EDS}
	\end{callself}
    \begin{messcall}{A}{$m_1 = \langle t_1, \mathsf{pk}_{CSP}(\mathrm{EDS}), \sigma_C(H(t_1\|EDS))\rangle$}{B}{}
     \end{messcall}
    \begin{messcall}{A}{$m_2 = \langle t_2, \mathsf{pk_{MA}}(L_{MA_{k}}, L_{MA_{s}}), \sigma_C{H(t_2\|L_{MA})}\rangle$}{C}
   \end{messcall}
    \begin{messcall}{B}{$m_3 = \langle t_3, \sigma_{CSP}(H(t_3\|EDS)) \rangle$}{A}
    \end{messcall}
    \begin{messcall}{C}{$m_4 = \langle t_4,  \sigma_{MA}(t_4\|L_{MA_{s}}\|L_{MA_{s}}) \rangle$}{A}
     \end{messcall}
   
  \end{sequencediagram}}
  \caption{Flow of $\mathsf{PFSE.Setup}$}
  \label{fig:setup}
\end{figure}

\noindent{\textbf{\LotS.Read:}} Represents a tree party protocol between \textbf{A}, the CSP and MA. $\mathsf{\LotS.Read}$ is initiated by the analyst $\textbf{A}$ wishing to perform statistical analysis on the encrypted dataset. To do so, \textbf{A} first generates a search token $\tau_s$ as $\tau_s = \langle H(w_i), H(w_j), f \rangle$, where $H(w_i)$ refers to a categorical or ordinal value, $H(w_j)$ refers to a variable, and $f$ is the description of a function that will be applied to the ciphertexts. Then, \textbf{A} sends $m_5 = \langle t_5, \tau_s, \sigma_A(H(t_5\|\tau_s)) \rangle$ to the MA. Upon reception, MA verifies the freshness and the signature of $m_5$. If the verification is successful, MA retrieves the list $L_{MA_{s}}$, containing the salted hashes, finds which salted values correspond to $H(w_i)$ and $H(w_j)$ and sends them to the CSP via $m_6 = \langle t_6, H(w_i\|s_i), H(w_j\|s_j), \sigma_{MA}(t_6\|H(w_i\|s_i)\|H(w_j\|s_j))$. Upon reception, the CSP verifies the freshness and the signature of $m_6$. If the verification is successful, the CSP finds the ciphertexts that correspond to $H(w_i\|s_i)$ with attribute $H(w_j\|s_j|)$ and sends the result $R$  back to \textbf{A} via $m_7 = \langle t_7, R, \sigma_{CSP}(H(t_6\|R))\rangle$. At the same time CSP retrieves the unique index for each ciphertext $c_i \in R$, and stores them in a list $L_{index}$ before outsourcing them to MA via $m_8 = \langle t_8, L_{index}, f, \sigma_{CSP}(H(t_7\|L_{index}\|f))\rangle$. Upon reception of $m_8$ (and if the verifications are successful), MA can construct the functional key $\mathsf{sk}_f$ as a linear combination of all the keys $\mathsf{k}_i$ such that $H(\mathsf{k}_i) \in L_{index}$. Apart from that, MA samples an error $e \approx Lap(1/\epsilon)$ and computes a noisy key $\mathsf{sk}_f' = \mathsf{sk}_f + \epsilon$. Finally, $sk_f'$ is send back to \textbf{A} via $m_9 = \langle t_9, \mathsf{pk_{A}}(\mathsf{sk_f'}), \sigma_{MA}(H(t_8\|\mathsf{sk_f}'))\rangle$. Upon receiving $\mathsf{sk}_f'$, \textbf{A} computes the result as follows:

\begin{dmath*}
\sum_{i=1}^{n}c_i - \mathsf{sk}_f = \sum_{1}^{n}(\mathsf{k}_i + x_i) - \sum_{1}^{n}\mathsf{k}_i + e = \sum_{1}^{n}x_i + e
\end{dmath*} 

$\mathsf{\LotS.Read}$ is illustrated in~\autoref{fig:read}.

\begin{figure}[H]
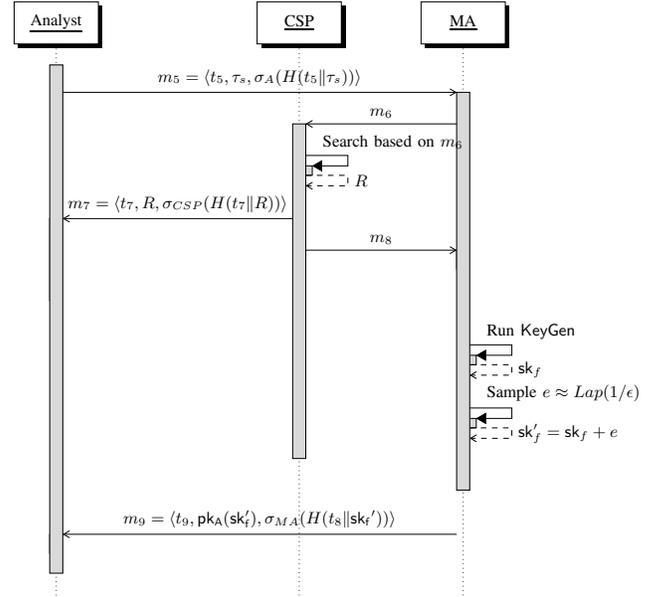

  \centering
  \scalebox{0.7}{
  \begin{sequencediagram}
    \newthread{A}{Analyst}{}
    \newinst[3.0]{B}{CSP}{}
    \newinst[1.5]{C}{MA}{}
    \begin{messcall}{A}{$m_5 = \langle t_5, \tau_s, \sigma_A(H(t_5\|\tau_s)) \rangle$}{C}{}
    \begin{messcall}{C}{$m_6$}{B}{}
    \begin{callself}{B}{Search based on $m_6$}{$R$}
	\end{callself}
    \begin{messcall}{B}{$m_7 = \langle t_7, R, \sigma_{CSP}(H(t_7\|R))\rangle$}{A}
    \begin{messcall}{B}{$m_8$}{C}{}    
     \end{messcall}
     \end{messcall}
     \begin{callself}{C}{Run $\mathsf{KeyGen}$}{$\mathsf{sk}_f$}
	\end{callself}
     \begin{callself}{C}{Sample $e \approx Lap(1/\epsilon)$}{$\mathsf{sk}_f' = \mathsf{sk}_f + e$}
	\end{callself}
	\end{messcall}
	 \end{messcall}
    \begin{messcall}{C}{$m_9 = \langle t_9, \mathsf{pk_{A}}(\mathsf{sk_f'}), \sigma_{MA}(H(t_8\|\mathsf{sk_f}'))\rangle$}{A}{}
    \end{messcall}
   
  \end{sequencediagram}}
  \caption{A complete run of $\mathsf{\LotS.Read}$ (The description of $m_6$ and $m_7$ is omitted from the diagram due to space constraints -- $m_6 = \langle t_6, H(w_i\|s_i), H(w_j\|s_j), \sigma_{MA}(t_6\|H(w_i\|s_i)\|H(w_j\|s_j))$ and ${m_8 = \langle t_8, L_{index}, f, \sigma_{CSP}(H(t_8\|L_{index}\|f))\rangle}$).}
  \label{fig:read}
\end{figure}

\section{Security Analysis}
\label{sec:security}

In this Section we prove the security of our protocol, and show that the $\mathsf{\LotS.Read}$ protocol is $\epsilon$-differential private one. Before proceeding to do so, we formally define our threat model.

\subsection{Threat Model}
\label{subsec:threat}

\paragraph*{\textbf{Threat Model}}Our threat model is similar to the one described in~\cite{Michalas:17:Trusted:Launch}, based on the Dolev-Yao adversarial model~\cite{dolev:1983}. We additionally  extend it by defining a set of new attacks.

\begin{myAttack}[Result Substitution Attack]
Let $\mathcal{ADV}$ be an adversary that observes the communication channels between \textbf{A} and the CSP. $\mathcal{ADV}$ successfully launches a \textit{Result Substitution Attack}, if she manages to replace the result list $R$, sent from the CSP to \textbf{A}, with another one $R'$.
\end{myAttack}

\begin{myAttack}[Key Substitution Attack]
Let $\mathcal{ADV}$ be an adversary that observes the communication channels between \textbf{A}, the CSP and MA. $\mathcal{ADV}$ successfully launches a \textit{Key Substitution Attack}, if $A$ receives a wrong $\mathsf{sk}_f'$ in a way that is indistinguishable to her. 
\end{myAttack}

\subsection{Protocol Security}
\label{subsec:protocol security}

We will proceed to prove $\mathsf{\LotS}$'s soundness against the attacks defined in Section~\ref{subsec:threat}.

\begin{proposition}[Result Substitution Attack Soundness]
\label{proposition:RSA}
Let $\mathcal{ADV}$ be an adversary that overhears the communication between \textbf{A} and the CSP. Then $\mathcal{ADV}$ cannot successfully launch a Result Substitution Attack.
\end{proposition}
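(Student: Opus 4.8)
The plan is to reduce a successful Result Substitution Attack to a forgery against the EUF-CMA secure signature scheme used throughout the protocol. Recall that the CSP transmits the result list to $\textbf{A}$ inside $m_7 = \langle t_7, R, \sigma_{CSP}(H(t_7\|R))\rangle$. For $\mathcal{ADV}$ to succeed, $\textbf{A}$ must accept some message $m_7' = \langle t_7', R', \sigma'\rangle$ with $R' \neq R$; by the definition of $\mathsf{\LotS.Read}$, acceptance requires that the signature verifies under $\mathsf{ver}_{CSP}$ and that the timestamp $t_7'$ is fresh. So the core of the argument is: any adversary that makes $\textbf{A}$ accept a doctored result must have produced a valid signature $\sigma'$ on a message $H(t_7'\|R')$ that the CSP never signed.

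First I would set up the reduction $\mathcal{B}$ that plays the EUF-CMA game: $\mathcal{B}$ receives the challenge verification key and embeds it as $\mathsf{ver}_{CSP}$, simulating the rest of the protocol honestly (it knows all other secret keys, or can request signatures from its signing oracle whenever the honest CSP would sign). When $\mathcal{ADV}$ delivers a message to $\textbf{A}$ on the $A$–CSP channel that $\textbf{A}$ accepts as a valid $m_7$ with $R' \neq R$, $\mathcal{B}$ extracts the pair $(H(t_7'\|R'), \sigma')$ and outputs it as its forgery. I would then argue two things: (i) the pair verifies under $\mathsf{ver}_{CSP}$ because $\textbf{A}$'s verification check passed; and (ii) it is a \emph{fresh} forgery — the CSP, in this session, only ever signed $H(t_7\|R)$ for the genuine $R$, and the freshness/timestamp discipline (together with first- and second-preimage resistance of $H$) rules out $\mathcal{ADV}$ replaying an old signed digest or finding a collision $H(t_7'\|R') = H(t_7\|R)$ with $R' \neq R$. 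Hence $\mathcal{B}$ wins the EUF-CMA game with essentially the same probability that $\mathcal{ADV}$ wins the Result Substitution game, which is negligible by assumption.

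I would also dispose of the alternative avenue for $\mathcal{ADV}$: rather than forging on the $A$–CSP link, she could try to tamper with $m_6$ (the $\textbf{MA}$–CSP message carrying $H(w_i\|s_i), H(w_j\|s_j)$) so that the CSP honestly computes and signs a \emph{different} legitimate result $R'$. But $m_6$ is itself signed by $\mathsf{MA}$ under $\mathsf{sig}_{MA}$, so the same EUF-CMA reduction (now embedding the challenge key as $\mathsf{ver}_{MA}$) shows the CSP will reject any modified $m_6$; thus $\mathcal{ADV}$ cannot induce a bogus-but-validly-signed $R'$ this way either. A brief hybrid over the (constant number of) signed messages on these channels, with a union bound, gives the final negligible bound.

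The main obstacle I expect is not the signature reduction itself, which is routine, but pinning down precisely what "indistinguishable/successful" means operationally for $\textbf{A}$ and ensuring the freshness argument is airtight — in particular, arguing that $\mathcal{ADV}$ cannot reuse a signature from a \emph{previous} $\mathsf{\LotS.Read}$ session in which the same $R$ (or a different $R$ with the same hash) legitimately appeared. This is where the timestamp $t_7$ inside the signed digest and the second-preimage resistance of $H$ do the real work, and I would state those hash assumptions and the monotonic-freshness assumption on timestamps explicitly as the hypotheses the proposition rests on.
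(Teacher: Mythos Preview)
Your argument is correct and follows essentially the same two-case structure as the paper: a replay of an old signed result is rejected by the timestamp check, and any fresh substitution of $R$ by $R'$ requires forging the CSP's signature, which EUF-CMA rules out. The paper stops at this informal case split, whereas you cast it as an explicit EUF-CMA reduction and additionally invoke second-preimage resistance of $H$ and an upstream-tampering case on the MA--CSP message; the latter is sound but strictly outside the adversary model of this proposition, since here $\mathcal{ADV}$ is assumed to act only on the \textbf{A}--CSP channel.
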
 

\begin{proof}
For $\mathcal{ADV}$ to successfully launch a Result Substitution Attack, she needs to tamper with the result list $R$ that is sent from the CSP to \textbf{A} via $m_6 = \langle t_6, R, \sigma_{CSP}(H(t_6\|R))\rangle$. To do, $\mathcal{ADV}$ has two choices:

\begin{itemize}
	\item Reply an old $m_6$ message
	\item Replace $R$ with another result list $R_{mal}$
\end{itemize}
\medskip

In the instance where $\mathcal{ADV}$ overhears the communication between \textbf{A} and the CSP, we can assume that $\mathcal{ADV}$ possesses an old $m_6$ message $m_{6_{old}} = \langle t_{6_{old}}, R_{old}, \sigma_{CSP}(H(t_{6_{old}}\|R_{old}))\rangle$. Thus, when the CSP sends $m_6$ to \textbf{A}, $\mathcal{ADV}$ intercepts the communication and replaces $m_6$ with $m_{6_{old}}$. Upon receiving $m_{6_{old}}$, \textbf{A} verifies the signature, and since $m_{6_{old}}$ contains a valid CSP's signature, the verification is successful. However, when \textbf{A} tries to verify the freshness of the message, she notices that the timestamp is old and thus drops the communication. As a result, the only way for $\mathcal{ADV}$ to successfully launch the attack, is to use another result list $R_{mal}$.

Just like before, when the CSP sends $m_6$ to \textbf{A}, $\mathcal{ADV}$ intercepts the communication and replaces $R$ with $R_{mal}$. However, the result list $R$ is also included in the CSP's signature. Thus, replacing $R$ with $R_{mal}$ in an indistinguishable way, is equivalent to forging the CSP's signature. Nonetheless, given the signature scheme's EUF-CMA security, there is only a negligible probability for this to happen and hence, the attack fails.

\end{proof}

\begin{proposition}[Key Substitution Attack Soundness]
\label{proposition:KSA}
Let $\mathcal{ADV}$ be an adversary that overhears the communication channels between \textbf{A}, the CSP and MA. Then \textbf{A} cannot successfully launch a Key Substitution Attack.
\end{proposition}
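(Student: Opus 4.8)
The plan is to mirror the proof of Proposition~\ref{proposition:RSA}: the value of $\mathsf{sk}_f'$ that \textbf{A} ultimately obtains is completely determined by the chain of authenticated messages $m_5, m_6, m_8, m_9$, so any ``silent'' corruption of the key must corrupt one of these messages without being detected. Recall the dependencies: $\mathsf{sk}_f'$ is a function of the search token $\tau_s$ that \textbf{A} signs inside $m_5$, of the salted hashes that MA signs inside $m_6$, of the index list $L_{index}$ and the function description $f$ that the CSP signs inside $m_8$, and finally of the ciphertext/signature that MA produces inside $m_9$. Hence, for $\mathcal{ADV}$ to make \textbf{A} accept some $\mathsf{sk}_{mal} \neq \mathsf{sk}_f'$ \emph{in a way that is indistinguishable to her}, $\mathcal{ADV}$ must, at one of these four hops, either (i) replay an older authenticated message, or (ii) substitute a modified payload that still carries a valid signature of the corresponding honest originator.

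For case (i), each timestamp $t_5,\dots,t_9$ is checked for freshness by the receiving honest party (\textbf{A}, the CSP, or MA), exactly as described for $\mathsf{\LotS.Setup}$; a stale timestamp makes that party abort, so \textbf{A} either receives no key or detects the inconsistency --- in neither case does she silently accept a wrong key, so the attack is not indistinguishable to her. For case (ii), note that in $m_5$ the token lies inside $\sigma_A$, in $m_6$ the salted hashes lie inside $\sigma_{MA}$, in $m_8$ the pair $(L_{index}, f)$ lies inside $\sigma_{CSP}$, and in $m_9$ the key $\mathsf{sk}_f'$ is bound (through $H(t_8\|\mathsf{sk}_f')$) by $\sigma_{MA}$. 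Altering any of these payloads so that the honest recipient still accepts amounts to producing a valid signature of \textbf{A}, the CSP, or MA on a message it never signed, i.e.\ an EUF-CMA forgery, which occurs with at most negligible probability; a union bound over the four hops keeps the overall success probability negligible, proving the claim.

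Two points require care, and I expect the second to be the real obstacle. First, $m_9$ transmits $\mathsf{pk_A}(\mathsf{sk}_f')$ rather than $\mathsf{sk}_f'$ in the clear, so one might fear a malleability attack that changes the decrypted key while leaving $\sigma_{MA}$ intact; this is ruled out because the signature is taken over the hash of the \emph{plaintext} key, so any such tampering makes \textbf{A}'s decrypted value disagree with the signed hash (and, independently, IND-CCA2 security forbids malleable ciphertexts). Second, and more delicately, the honest key is already noisy, $\mathsf{sk}_f' = \mathsf{sk}_f + e$ with $e \approx Lap(1/\epsilon)$, so one must argue that this injected noise does not give $\mathcal{ADV}$ cover to pass off a perturbed key as ``just another noisy sample'': the point is that \textbf{A}'s acceptance rule is ``the signature verifies on the decrypted value under a fresh timestamp'', not ``the value looks statistically plausible'', so any extra shift $\delta$ introduced by $\mathcal{ADV}$ either fails signature verification or would itself require forging MA's signature on $\mathsf{sk}_f' + \delta$. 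A fully rigorous argument would be organised as a hybrid over the four hops, replacing one honest signature verification at a time by a reduction to the EUF-CMA game.
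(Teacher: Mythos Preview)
Your argument is correct and rests on the same two primitives the paper invokes: freshness checks via timestamps to kill replays, and EUF-CMA security of the signature scheme to kill payload substitution. The decomposition differs, however. The paper considers only two attack vectors---directly swapping the key in the MA$\to$\textbf{A} message, or forcing MA to compute a key for some $g\neq f$ by tampering with the CSP$\to$MA message carrying $(L_{index},f)$---and dispatches each with the replay/forgery dichotomy you also use. You instead walk the full dependency chain $m_5,m_6,m_8,m_9$ hop by hop and take a union bound, and you additionally treat two points the paper leaves unaddressed: that the IND-CCA2 encryption of $\mathsf{sk}_f'$ inside $m_9$ cannot be mauled without invalidating the signature on the plaintext hash, and that the Laplace noise $e$ gives $\mathcal{ADV}$ no statistical cover because \textbf{A}'s acceptance rule is purely cryptographic rather than distributional. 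Your version is therefore strictly more thorough; the paper's is shorter only because it silently folds the upstream hops into the ``force MA to compute a wrong key'' case without enumerating them.
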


\begin{proof}
Since the encryption keys are elements in $\mathbb{Z}$, and the secret functional key is a linear combination of the encryption keys, it follows that the functional key lives in $\mathbb{Z}$ as well. This means that \textbf{A} is expecting to receive an integer and hence $\mathcal{ADV}$ could easily replace the real integer number in a way that is indistinguishable for $\textbf{A}$. As a result, we need to make sure that even the slightest modification in the structure of the messages will have a big impact on what \textbf{A} receives.
For $\mathcal{ADV}$ to successfully launch a Key Substitution Attack, she needs to replace the key $\mathsf{sk}_f'$ with one of her choice in a way that is indistinguishable for \textbf{A}. To do so, $\mathcal{ADV}$ can follow the following two approaches:

\begin{itemize}
	\item Replace the functional key sent from MA to \textbf{A}, as part of $m_8$, with a key of her choice.
	\item Force MA to compute a functional key for a function $g$ such that $g \neq f$.
\end{itemize}

\medskip

Tampering with the $m_8$ message sent from MA to \textbf{A} requires $\mathcal{ADV}$ to either use an old $m_8$ message or forge the signature of MA. As we saw in the proof for proposition~\autoref{proposition:RSA}, the use of timestamps and the EUF-CMA security of the signature scheme, ensure that $\mathcal{ADV}$ can only achieve this with negligible probability. Hence, we conclude that the only way for $\mathcal{ADV}$ to successfully launch a Key Substitution Attack is to force MA to compute a functional key for a function $g \neq f$.

Fooling MA into computing a wrong functional key, requires $\mathcal{ADV}$ to tamper with the $m_7$ message sent from the CSP to MA. Recall that $ \langle t_7, L_{index}, f, \sigma_{CSP}(H(t_7\|L_{index}\|f))\rangle$. By observing the structure of $m_7$, we see that $\mathcal{ADV}$ can either target $L_{index}$, the description of the function $f$, or both. However, similarly to the proof for proposition~\autoref{proposition:RSA}, as $L_{index}$ and the function $f$'s description are also included in the CSP's signature, tampering with them is equivalent to forging the CSP's signature, which can only happen with negligible probability due to the signature scheme's  EUF-CMA security. Moreover, the timestamp, ensures that $\mathcal{ADV}$ cannot replace an older message. We thus prove that in both cases, $\mathcal{ADV}$ can successfully launch a Key Substitution Attack with negligible probability.
\end{proof}
\subsection{Differential Privacy}

In this section, we prove that the $\mathsf{\LotS}.Read$ protocol is $\epsilon$-differential private one.

\begin{theorem}
Let $EDS, EDS' \in \mathbb{N}^{|DS|}$ be arbitrary neighbouring datasets, let $q: \mathbb{N}^{|DS|} \rightarrow \mathbb{R}$ be an arbitrary query and let $r, r'\in \mathbb{R}$. Moreover, let $M_L$ be the Laplace Mechanism. Then, the $\mathsf{\LotS.Read}$ protocol is $\epsilon$-differentially private as per Definition~\ref{def:DP}.
\end{theorem}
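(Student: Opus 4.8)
The plan is to reduce the claim to the already-established fact that the Laplace mechanism is $\epsilon$-differentially private. Recall from $\mathsf{\LotS.Read}$ that the analyst never sees $q(EDS)$ directly: she receives the noisy functional key $\mathsf{sk}_f' = \mathsf{sk}_f + e$ with $e \sim Lap(1/\epsilon)$, and reconstructs $\sum_{i=1}^n c_i - \mathsf{sk}_f' = \sum_{i=1}^n x_i - e = q(EDS) - e$. Since $Lap(1/\epsilon)$ is symmetric about $0$, the distribution of $q(EDS) - e$ equals that of $q(EDS) + e$, so the value returned to $\mathbf{A}$ is distributed exactly as $M_L(EDS, q, \epsilon)$ with scale $b = 1/\epsilon$. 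Thus the output of $\mathsf{\LotS.Read}$, viewed as a randomized map from the dataset to $\mathbb{R}$, is literally the Laplace mechanism applied to $q$.

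First I would fix arbitrary neighbouring datasets $EDS, EDS' \in \mathbb{N}^{|DS|}$ and an arbitrary measurable $\mathcal{S} \subset \mathbb{R}$, and write out the density of the analyst's output. Using the Laplace density $Lap(z) = \frac{\epsilon}{2} e^{-\epsilon |z|}$, the probability that $\mathsf{\LotS.Read}$ on $EDS$ returns a value in $\mathcal{S}$ is $\int_{\mathcal{S}} \frac{\epsilon}{2} e^{-\epsilon |t - q(EDS)|}\, dt$, and similarly for $EDS'$. Taking the pointwise ratio of integrands gives $\exp\!\big(\epsilon(|t - q(EDS')| - |t - q(EDS)|)\big)$, which by the reverse triangle inequality is bounded by $\exp\!\big(\epsilon\, |q(EDS) - q(EDS')|\big)$. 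Then I would invoke the sensitivity bound: because $EDS$ and $EDS'$ are neighbouring, $\|EDS - EDS'\|_1 \le 1$, and the queries $q$ supported here (sums/averages over a selected column, realized through the $\ell_1$-norm MIFE functionality) have $\ell_1$-sensitivity at most $1$, so $|q(EDS) - q(EDS')| \le 1$. Hence the ratio of densities is at most $e^\epsilon$ pointwise, and integrating over $\mathcal{S}$ yields $Pr[\mathcal{M}(EDS) \in \mathcal{S}] \le e^\epsilon\, Pr[\mathcal{M}(EDS') \in \mathcal{S}]$, which is exactly Definition~\ref{def:DP}.

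Finally I would argue that encryption does not affect the privacy guarantee: the functional-key mechanism reveals to $\mathbf{A}$ precisely $q(EDS) + (\text{Laplace noise})$ and nothing more about the underlying records, since the individual ciphertexts $c_i = x_i + \mathsf{k}_i$ are one-AD-IND secure and the key $\mathsf{sk}_f'$ only unlocks the noised aggregate. So the composition with the cryptographic layer is a post-processing of the Laplace mechanism's output, and $\epsilon$-DP is closed under post-processing.

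The main obstacle is making the sensitivity argument rigorous rather than hand-waved: one must pin down exactly which class of queries $q$ is admissible (here, aggregates computed via the $\ell_1$-norm MIFE over a subset of rows selected by a hashed categorical/ordinal value) and verify that for every such $q$, a one-element change in the dataset changes $q$ by at most $1$ — which in particular forces the numerical entries to lie in a bounded range or the query to be a count, so that the chosen scale $b = 1/\epsilon$ genuinely matches the sensitivity. A secondary subtlety worth a sentence is confirming that the sign flip on $e$ in the analyst's reconstruction is harmless, which follows immediately from symmetry of $Lap$.
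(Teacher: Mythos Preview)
Your proposal is correct and follows essentially the same route as the paper: both identify the analyst's output as $q(EDS)$ plus Laplace noise and bound the pointwise density ratio by $e^{\epsilon}$ via the (reverse) triangle inequality together with the sensitivity of $q$. Your treatment is more careful---you handle the sign of $e$ by symmetry of $Lap$, integrate over sets to match Definition~\ref{def:DP}, add a post-processing remark for the cryptographic layer, and explicitly flag the sensitivity hypothesis that the paper leaves implicit (it silently writes $\Delta q/\epsilon$ in the proof while the protocol samples with scale $1/\epsilon$)---but the core argument is identical.
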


\begin{proof}

Our goal is to prove that issuing a private query $q$ to EDS reveals no more information than what is allowed by the privacy factor $\epsilon$. In our construction, when \textbf{A} uses $sk_f'$ to decrypt the result list $R$, she gets a result $r'$. In other words, $q(EDS) = r'$. However, the query contains the secret functional key $\mathsf{sk}_f' = \mathsf{sk}_f + e$, where $e \leftarrow Lap(\frac{\Delta q}{\epsilon})$, and when the Laplace Mechanism is applied to the query we get:

\begin{dmath*}
{M_L(q, EDS, \epsilon) = r = r' + e \Rightarrow} \\ {e = r - r' \Rightarrow} \\ e = r - q(EDS).
\end{dmath*}

However, since $e$ is arbitrarily chosen from the Laplace distribution ($e \xleftarrow{\$} Lap(\frac{\Delta q}{\epsilon})$), then, without loss of generality, we can replace $e$ with $Lap(\frac{\Delta q}{\epsilon})$. Hence, we get:

\begin{dmath*}
{\frac{Pr[M_L(EDS, q, \epsilon) = r]}{Pr[M_L(EDS', q, \epsilon) = r]} = \frac{Pr[Lap(\frac{\Delta q}{\epsilon}) = r - q(EDS)]}{Pr[Lap(\frac{\Delta q}{\epsilon}) = r - q(EDB')]} }={ \frac{\frac{\epsilon}{2\Delta q}exp{(-\frac{|r-q(EDS)|}{\Delta q}\epsilon)}}{\frac{\epsilon}{2\Delta q}exp{(-\frac{|r-q(EDS')|}{\Delta q}\epsilon)}}}   = {exp\left(-\frac{\epsilon |r - q(EDS')| - |r -q(EDS)|)}{\Delta q}\right)} = {exp\left(\frac{\epsilon |q(EDS') - q(EDS)|}{\Delta q} \right) \leq e^\epsilon} 
\end{dmath*}
\end{proof}

\bigskip


\section{Conclusions}
\label{sec:conclusions}
We strongly believe that in the future cloud-based services will rely less on traditional decryption of information and more on computations over encrypted data. With this in mind, we proposed \LotS; a hybrid protocol based on Functional Encryption and differential privacy. Our protocol allows an analyst to periodically query a CSP for the release of statistics, without breaching the individuals' privacy. We hope that this work will inspire researchers and open new questions in the fascinating field of privacy-preserving computations in untrusted clouds, thus allowing us to create a bridge between  the theoretical concepts of FE and real life applications.

\section*{Acknowledgements}
The research leading to these results has received support from the Innovative Medicines Initiative Joint Undertaking under grant agreement n° 101034366, resources of which are composed of financial contributions from the European Union's Framework Programme Horizon 2020 and EFPIA companies’ in kind contribution.

\bibliographystyle{ieeetr}
\balance
\bibliography{CBMS}
\vspace{12pt}


\end{document}